\newtheorem{thm}{Theorem}[section]
\newtheorem{prop}[thm]{Proposition}
\newtheorem{cor}[thm]{Corollary}
\theoremstyle{definition}
\newtheorem{ex}[thm]{Example}
\newcommand{\R}{\mathbb R}
\newcommand{\Q}{\mathbb Q}
\newcommand{\x}{\widetilde x}
\newcommand{\T}{\mathrm T}
\newcommand{\e}{\mathfrak e}
\newcommand{\Hxx}{\undertilde{\mathfrak H}}
\newcommand{\HXX}{\widetilde{\mathfrak H}}
\newcommand{\HXx}{\mathfrak H}
\def\t{\widetilde}
\def\ep{\varepsilon}
\def\epsilon{\varepsilon}
\def\beq{\begin{equation}}
\def\eeq{\end{equation}}
\def\bea{\begin{eqnarray}}
\def\eea{\end{eqnarray}}
\def\nn{\nonumber}
\newbox\meibox
\def\placeunder#1#2#3#4{\setbox\meibox%
\vbox{\hbox{\hskip#4$\hphantom{#2}$}\hbox{$\hphantom{#1}$}}%
\vtop{\baselineskip=0pt\lineskiplimit=\baselineskip%
\lineskip=#3\hbox to \wd\meibox{\hfil\hskip#4$#2$\hfil}%
\hbox to \wd\meibox{\hfil$#1$\hfil}}}
\def\undertilde#1{\mathchoice{%
\placeunder{\vbox to 1.4pt{\hbox{$\displaystyle\widetilde{\,\,\,
}$}\vss}}{\displaystyle#1}{1.5pt}{1.5pt}}%
{\placeunder{\vbox to 1.4pt{\hbox{$\textstyle\widetilde{\,\,
}$}\vss}}{\textstyle#1}{1.5pt}{1.5pt}}%
{\placeunder{\vbox to 1.4pt{\hbox{$\scriptstyle\tilde{
}$}\vss}}{\scriptstyle#1}{1pt}{1pt}}%
{\placeunder{\vbox to 1.4pt{\hbox{$\scriptscriptstyle\tilde{
}$}\vss}}{\scriptscriptstyle#1}{1pt}{1pt}}%
}
\title[Quadratic vector fields admitting integral-preserving discretizations]
{New classes of quadratic vector fields admitting integral-preserving Kahan-Hirota-Kimura discretizations}
\author{Matteo Petrera and Ren\'e Zander}
\begin{document}

\maketitle

\begin{center}
{\footnotesize{
Institut f\"ur Mathematik, MA 7-2\\
Technische Universit\"at Berlin, Str. des 17. Juni 136,
10623 Berlin, Germany
}}
\end{center}

\begin{abstract}
\noindent
We present some new  families of quadratic vector fields, not  necessarily
integrable, for which their Kahan-Hirota-Kimura discretization exhibits the preservation of
some of the characterizing features of the underlying continuous systems (conserved quantities and invariant measures).
\end{abstract}

\let\thefootnote\relax\footnote{E-mail: petrera@math.tu-berlin.de, zander@math.tu-berlin.de}

\section{Introduction}

The purpose of this paper is to present some new classes of quadratic vector fields for which the Kahan-Hirota-Kimura discretization \cite{K} produces birational maps which preserve some of the characterizing features of the underlying continuous systems, 
as for example conserved quantities and invariant measures.

It is a well-known fact in numerical integration that 
if a dynamical system
has conserved quantities, or is volume-preserving, or has some other important geometrical feature
(such as being invariant under the action of a Lie group of symmetries) a generic discretization scheme of the underlying differential equations does not
share the same properties. On the contrary, Kahan-Hirota-Kimura discretizations, which are
applicable whenever the continuous vector field is quadratic, seem to inherit several good properties
of the continuous systems they are discretizing. In particular, for the special case
of completely integrable systems, ideally one would like to obtain discretizations which are
themselves completely integrable. As a matter of fact
the Kahan-Hirota-Kimura discretization of many known integrable quadratic systems of differential equations possesses
this remarkable integrability-preserving feature  \cite{PPS1, PPS2,PPS3,HP, C1,C2,C3,PS1,PS2,PS3,PS4,Ko}.

The Kahan-Hirota-Kimura discretization scheme has been introduced in 1993 by W. Kahan in the unpublished notes\cite{K}. It can be applied to any system of ordinary differential equations $\dot{x}=f(x)$ for $x:\R \to \R^n$
with
\begin{equation}
\label{vectorfield}
f(x)=Q(x)+Bx+c,\quad x\in\R^n.
\end{equation}
Here each component of $Q:\R^n \to \R^n$ is a quadratic form, while $B\in {\rm{Mat}}_{n \times n} (\R)$ and
$c \in \R^n$.
Then the Kahan-Hirota-Kimura discretization is given by
\begin{equation}
\label{KHKdiscretization}
\frac{\x-x}{2\epsilon}=Q(x,\x)+\frac12B(x+\x)+c,
\end{equation}
where 
\begin{equation*}
Q(x,\x)=\frac12\left(Q(x+\x)-Q(x)-Q(\x)\right),
\end{equation*}
is the symmetric bilinear form corresponding to the quadratic form
$Q$. Here and below we use the following notational convention
which will allow us to omit a lot of indices: for a sequence
${x}:\mathbb Z\to\mathbb R$ we write ${x}$ for ${x}_k$ and $\widetilde{{x}}$ for
${x}_{k+1}$. Equation (\ref{KHKdiscretization}) is {linear} with respect
to $\widetilde {x}$ and therefore defines a {rational} map
$\widetilde{{x}}=\Phi({x},\epsilon)$. Clearly, this map approximates the
time-$(2\epsilon)$-shift along the solutions of the original
differential system. 
(We have
chosen a slightly unusual notation $2\epsilon$ for the time step,
in order to avoid appearance of powers of 2 in numerous
formulae; a more standard choice would lead to changing
$\epsilon\mapsto\epsilon/2$ everywhere.) 
Since equation (\ref{KHKdiscretization}) remains invariant under the interchange
${x}\leftrightarrow\widetilde{{x}}$ with the simultaneous sign
inversion $\epsilon\mapsto-\epsilon$, one has the {
reversibility} property $
\Phi^{-1}({x},\epsilon)=\Phi({ x},-\epsilon).
$
In particular, the map $\Phi$ is {birational}.
The explicit form of the map $\Phi$ defined by \eqref{KHKdiscretization} is 
\begin{equation}
\t x =\Phi(x,\ep)= x + 2\ep \left( I - \ep f'(x) \right)^{-1} f(x),
\label{eq: Phi gen}
\end{equation}
where $f'(x)$ denotes the Jacobi matrix of $f(x)$. Moreover, if the vector field $f(x)$ is homogeneous (of degree 2), then \eqref{eq: Phi gen} can be equivalently rewritten as
\begin{equation}
\t x =\Phi(x,\ep)= \left( I - \epsilon f'(x) \right)^{-1} x.
\label{eq: Phi gen2}
\end{equation}
Due to the reversibility of the map, in the latter case we also have:
$$
x =\Phi(\t x,-\ep)= \left( I + \epsilon f'(\t x) \right)^{-1} \t x \quad \Leftrightarrow \quad \t x = \left( I + \epsilon f'(\t x) \right) x.
$$

Kahan applied this discretization scheme to the famous
Lotka-Volterra system and showed that in this case it possesses a
very remarkable non-spiralling property. 

The next, definitely more intriguing, appearance of this discretization
was in the two papers by R. Hirota and K. Kimura who (being
apparently unaware of the work by Kahan) applied it to two famous
{integrable} systems of classical mechanics, the Euler top and
the Lagrange top \cite{HK1, HK2}. Surprisingly, the Kahan-Hirota-Kimura
discretization scheme produced {integrable  maps} 
in both the Euler and the Lagrange
cases of rigid body motion. 
Even more
surprisingly, the mechanism which assures integrability in these
two cases seems to be rather different from the majority of
examples known in the area of integrable discretizations, and,
more generally, integrable maps, cf. \cite{S}. 

In the last years many efforts in understanding the
integrability mechanism of the Kahan-Hirota-Kimura
scheme have been spent. We refer to the papers
\cite{PPS1, PPS2,PPS3,HP, C1,C2,C3,PS1,PS2,PS3,PS4,Ko}
for a rich collection of results, including many
examples of integrability-preserving Kahan-Hirota-Kimura
maps.


\section{Higher-dimensional generalizations of reduced Nahm systems}

The first class of quadratic differential equations we want to consider is a higher-dimensional generalization of the two-dimensional Nahm systems introduced in \cite{HMM},
\begin{equation} \label{k3}
\left\{ \begin{array}{l}
\dot x_1 = x_1^2 - x_2^2, \vspace{.1truecm}\\
\dot x_2 = - 2x_1x_2,
\end{array} \right.
\qquad
\left\{ \begin{array}{l}
\dot x_1 = {\displaystyle{2 x_1^2 - 12x_2^2}}, \vspace{.1truecm}\\
\dot x_2 = - 6 x_1x_2 - 4 x_2^2,
\end{array} \right. \qquad 
\left\{ \begin{array}{l}
\dot x_1 = {\displaystyle{2 x_1^2 - x_2^2}}, \vspace{.1truecm}\\
\dot x_2 = - 10 x_1x_2+ x_2^2.
\end{array} \right.
\end{equation}
Such systems can be explicitly integrated in terms of elliptic functions and they admit
integrals of motion given respectively by
\begin{eqnarray}
&&H_1= x_2 (3 x_1^2 - x_2^2),\label{H1} \\
&&H_2= x_2(2x_1+3x_2)(x_1-x_2)^2, \\
&&H_3= x_2(3x_1 -x_2)^2(4x_1 +x_2)^3.\label{H3} 
\end{eqnarray}
Systems (\ref{k3}) have been discussed in \cite{HMM} and
discretized by means of the 
Kahan-Hirota-Kimura scheme in \cite{PPS2} (see also \cite{CT}). There it is shown that the corresponding Kahan-Hirota-Kimura maps
admit conserved quantities which are $\ep$-perturbations of the continuous ones given by (\ref{H1}--\ref{H3}).

We now observe that
 systems (\ref{k3}) are  special two-dimensional instances of the following parametric $n$-dimensional family of quadratic differential equations:
\begin{equation}
\dot x =\ell_1^{1-a}\ell_2^{1-b}\ell_3^{1-c}J\nabla(\ell_1^a\ell_2^b\ell_3^c),\label{nahm}
\end{equation}
where $\ell_1,\ell_2,\ell_3\colon\R^n\rightarrow\R$ are linear forms, $J\in {\rm{Mat}}_{n \times n} (\R)$  is skew-symmetric and $a,b,c\in\R.$ System (\ref{nahm}) has
$$
H=\ell_1^a\ell_2^b\ell_3^c
$$
as integral of motion and
\begin{equation}
\Omega=\frac{\mathrm dx_1\wedge\mathrm dx_2\wedge\dotsb\wedge\mathrm dx_n}{\ell_1\ell_2\ell_3}\label{nahm_measure}
\end{equation}
as invariant measure form. Note that if $(a,b,c)=(1,1,1)$
and $n$ is an even number then (\ref{nahm}) is a canonical Hamiltonian system on $\R^{n}$ with a homogeneous cubic Hamiltonian.

It is immediate to verify that, if $n=2$ and $J\in {\rm{Mat}}_{2 \times 2} (\R)$  is the canonical symplectic matrix,
the first system in (\ref{k3}) is obtained if $(a,b,c)=(1,1,1)$
and $(\ell_1,\ell_2,\ell_3)=(x_2,\sqrt{3}x_1-x_2,\sqrt{3}x_1+x_2)$, the second system in (\ref{k3}) is obtained if $(a,b,c)=(1,1,2)$
and $(\ell_1,\ell_2,\ell_3)=(x_2,2x_1+3x_2,x_1-x_2)$ and the third system in (\ref{k3}) is obtained if $(a,b,c)=(1,2,3)$
and $(\ell_1,\ell_2,\ell_3)=(x_2,3x_1 -x_2,4x_1 +x_2)$.

We now turn our attention to the Kahan-Hirota-Kimura discretization of the quadratic vector field (\ref{nahm}).
Here and below we avoid to write explicitly the discrete equations of motion discretizing the continuous vector fields we are going to consider. Such equations are constructed by using (\ref{KHKdiscretization}) and the explicit form of the Kahan-Hirota-Kimura map is given by
(\ref{eq: Phi gen}) (or (\ref{eq: Phi gen2}) if $f$ is homogeneous).

We give the following statement (note that 
$(a,b,c)$ denotes any permutation of $\{a,b,c\}$).

\begin{prop}
\label{prop:nahm_measure}
The following claims are true.

\begin{enumerate}

\item[(i)] The Kahan-Hirota-Kimura map discretizing system (\ref{nahm})
admits (\ref{nahm_measure}) as invariant measure form.

\smallskip

\item[(ii)] Let $(a,b,c)=(1,1,2)$ and assume that $\ell_3=\alpha \ell_1+\beta \ell_2$ for $\alpha,\beta\in\R$. Then the corresponding Kahan-Hirota-Kimura map 
admits the conserved quantity
\begin{equation}
H_{\epsilon}=\frac{H}{1+\epsilon^2P_2+\epsilon^4P_4} , \label{nahm112_Heps}
\end{equation}
where 
\begin{eqnarray}
P_2&=&-2 \lambda_{12}^2\left(5\alpha^2\ell_1^2+6\alpha\beta\ell_1\ell_2+5\beta^2\ell_2^2\right), \nonumber\\
P_4&=&\lambda_{12}^4 \left(9\alpha^4\ell_1^4-2\alpha^2\beta^2\ell_1^2\ell_2^2+9\beta^4\ell_2^4\right), \nonumber
\end{eqnarray}
with $\lambda_{12}=\nabla\ell_1^{\T}J\nabla\ell_2$.

\smallskip

\item[(iii)] Let $(a,b,c)=(1,2,3)$ and assume that $\ell_3=\alpha \ell_1+\beta \ell_2$ for $\alpha,\beta\in\R$. Then the corresponding Kahan-Hirota-Kimura map 
admits the conserved quantity
\begin{equation}
H_{\epsilon}=\frac{H}{1+\epsilon^2P_2+\epsilon^4P_4+\epsilon^6P_6}, \label{nahm123_Heps}
\end{equation}
where 
\begin{eqnarray}
\qquad P_2&=&-\lambda_{12}^2\left(32\alpha^2\ell_1^2+40\alpha\beta\ell_1\ell_2+35\beta^2\ell_2^2\right), \nonumber \\
\qquad P_4&=&\lambda_{12}^4\left(256\alpha^4\ell_1^4+640\alpha^3\beta\ell_1^3\ell_2+
960\alpha^2\beta^2\ell_1^2\ell_2^2 +592\alpha\beta^3\ell_1\ell_2^3 
+259\beta^4\ell_2^4\right), \nonumber\\
\qquad P_6&=&-\lambda_{12}^6\left(-1152\alpha^3\beta^3\ell_1^3\ell_2^3-864\alpha^2\beta^4\ell_1^2\ell_2^4
-
216\alpha\beta^5\ell_1\ell_2^5+225\beta^6\ell_2^6\right), \nonumber
\end{eqnarray}
with $\lambda_{12}=\nabla\ell_1^{\T}J\nabla\ell_2$.
\end{enumerate}
\end{prop}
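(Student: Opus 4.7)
The proof splits into the $n$-dimensional measure-preservation statement (i) and the conserved-quantity statements (ii)--(iii), which I would reduce to a computation in the two-variable subsystem determined by $(\ell_1,\ell_2)$.

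\textbf{Plan for (i).} The plan is to verify the Jacobian criterion for invariance of $\Omega$. Differentiating \eqref{KHKdiscretization} implicitly and using $\partial_{\t x}Q(x,\t x)=\tfrac12 Q'(x)$ together with $\partial_x Q(x,\t x)=\tfrac12 Q'(\t x)$, one obtains $\Phi'(x)=(I-\epsilon f'(x))^{-1}(I+\epsilon f'(\t x))$. Preservation of $\Omega$ is then equivalent to the polynomial identity
\[
\ell_1(x)\ell_2(x)\ell_3(x)\,\det\bigl(I+\epsilon f'(\t x)\bigr)=\ell_1(\t x)\ell_2(\t x)\ell_3(\t x)\,\det\bigl(I-\epsilon f'(x)\bigr)
\]
modulo the KHK equations. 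Setting $v_i:=\nabla\ell_i$ and $h:=a\ell_2\ell_3 v_1+b\ell_1\ell_3 v_2+c\ell_1\ell_2 v_3$, so that $f=Jh$, a direct computation shows that the Jacobian factors as $f'(x)=JVL(x)^{\T}V^{\T}$, where $V:=[v_1\;v_2\;v_3]\in\R^{n\times 3}$ and $L(x)$ is a $3\times3$ matrix whose entries are linear in $\ell_i$. Sylvester's identity then gives $\det(I_n\mp\epsilon f'(x))=\det(I_3\mp\epsilon\Lambda L(x)^{\T})$ with $\Lambda_{ij}:=v_i^{\T}Jv_j$, reducing the above to a $3\times 3$ polynomial identity, which I would verify by reducing modulo the three scalar relations obtained from contracting \eqref{KHKdiscretization} with $v_i^{\T}$.

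\textbf{Plan for (ii)--(iii).} The key observation is a dimensional reduction afforded by the hypothesis $\ell_3=\alpha\ell_1+\beta\ell_2$: this implies $v_3=\alpha v_1+\beta v_2$, whence $\lambda_{13}=\beta\lambda_{12}$, $\lambda_{23}=-\alpha\lambda_{12}$, and by linearity of $\ell_3$ also $\t\ell_3=\alpha\t\ell_1+\beta\t\ell_2$. Contracting \eqref{KHKdiscretization} with $v_1^{\T}$ and $v_2^{\T}$ and substituting these relations yields the closed system
\[
\begin{aligned}
\tfrac{\t\ell_1-\ell_1}{2\epsilon}&=\lambda_{12}\left[b\alpha\,\ell_1\t\ell_1+\tfrac{(b+c)\beta}{2}(\ell_1\t\ell_2+\t\ell_1\ell_2)\right],\\
\tfrac{\t\ell_2-\ell_2}{2\epsilon}&=-\lambda_{12}\left[a\beta\,\ell_2\t\ell_2+\tfrac{(a+c)\alpha}{2}(\ell_1\t\ell_2+\t\ell_1\ell_2)\right],
\end{aligned}
\]
which is exactly the KHK discretization of the reduced two-dimensional Nahm-type system in $(\ell_1,\ell_2)$. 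Since the candidate $H_\epsilon$ depends only on $\ell_1,\ell_2$, its conservation on the full $n$-dimensional KHK map is equivalent to its conservation on this $2$D reduction. For the two specific choices $(a,b,c)=(1,1,2)$ and $(a,b,c)=(1,2,3)$, I would then verify $H_\epsilon(\t\ell_1,\t\ell_2)=H_\epsilon(\ell_1,\ell_2)$ by clearing denominators and reducing to a finite polynomial identity in $(\ell_1,\ell_2,\t\ell_1,\t\ell_2,\epsilon,\alpha,\beta,\lambda_{12})$, tractable with a computer algebra system; the symmetry under permutation of $(a,b,c)$ asserted in the statement is then automatic from the manifest symmetry of \eqref{nahm} under permutation of the roles of $(\ell_1,\ell_2,\ell_3)$.

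\textbf{Main obstacle.} The main conceptual and computational burden lies in part (i): even after the Sylvester reduction, verifying the $3\times 3$ determinantal identity modulo the KHK ideal is not immediate, and a cleaner route would be to prove or cite a general theorem asserting that for vector fields of the form $f=\phi\,J\nabla(\log H)$ with $\phi$ a product of linear factors and $J$ constant skew-symmetric, the KHK discretization preserves the measure $dx/\phi$. Parts (ii)--(iii) then reduce to a finite, mechanical verification once the 2D reduction is in place, the only real risk being the precise matching of coefficients of the polynomials $P_{2k}$ for the two chosen triples.
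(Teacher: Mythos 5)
Your proposal is correct and follows essentially the same route as the paper: for (i) the paper likewise reduces $\det(\partial\x/\partial x)=\det(I+\epsilon\t f')/\det(I-\epsilon f')$ via Sylvester's identity (peeling off the rank-one pieces with Sherman--Morrison rather than using the rank-3 identity in one shot) and then checks the measure identity modulo the ideal generated by the contracted relations with a CAS; for (ii)--(iii) it contracts with $\nabla\ell_1^{\T},\nabla\ell_2^{\T}$ and uses $\ell_3=\alpha\ell_1+\beta\ell_2$ to obtain exactly your closed two-variable relations, then verifies the conservation of $H_\epsilon$ as a polynomial identity. Your reduced equations agree coefficient-for-coefficient with the paper's, so the only remaining work in both treatments is the deferred computer-algebra verification.
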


\begin{proof} We prove all claims.

\smallskip

\noindent (i) Equations (\ref{nahm}) 
can be written as
\begin{equation}
\dot x=J\left(a\ell_2\ell_3\nabla\ell_1+b\ell_1\ell_3\nabla\ell_2+c\ell_1\ell_2\nabla\ell_3\right).
\label{12}
\end{equation}
Their Kahan-Hirota-Kimura discretization reads
\begin{equation}
\label{nahm_HK}
\x-x= \ep J\Big(a(\ell_2\t \ell_3+\t \ell_2\ell_3)\nabla\ell_1+b(\ell_1 \t \ell_3+\t \ell_1\ell_3)\nabla\ell_2
 +c(\ell_1\t \ell_2+\t \ell_1\ell_2 )\nabla\ell_3\Big).
\end{equation}
Here and below we adopt the shortcut notation $\t {F} = F(\x)$ for any map $F\colon\R^n\rightarrow V$, where $V=\R$, $\R^n$ or ${\rm{Mat}}_{n \times n} (\R)$.
Now, multiplying (\ref{nahm_HK}) from the left by the vectors $\nabla\ell_i^{\T}$, $i=1,2,3$, we obtain
\begin{eqnarray}
\t \ell_1 -\ell_1&=&\ep \lambda_{12}b(\ell_1 \t \ell_3+\t \ell_1\ell_3)+\ep\lambda_{13}c(\ell_1\t \ell_2+\t \ell_1\ell_2),
\label{nahm_e1}\\
\t \ell_2 -\ell_2&=&-\ep\lambda_{12}a(\ell_2 \t \ell_3+\t \ell_2\ell_3)+\ep\lambda_{23}c(\ell_1 \t \ell_2+\t \ell_1\ell_2),
\label{nahm_e2}\\
\t \ell_3 -\ell_3&=&-\ep\lambda_{13}a(\ell_2\t \ell_3+\t \ell_2\ell_3)-\ep\lambda_{23}b(\ell_1 \t \ell_3+\t \ell_1\ell_3),
\label{nahm_e3}
\end{eqnarray}
where $\lambda_{ij}=\nabla\ell_i^{\T}J\nabla\ell_j$, for $1\leq i<j\leq 3$.

Now, the Jacobian of the vector field (\ref{12}) is
\begin{equation}
f'=J(A_1\nabla\ell_1^{\T}+A_2\nabla\ell_2^{\T}+A_3\nabla\ell_3^{\T}),
\nn
\end{equation}
where 
$$
A_1=b\ell_3\nabla\ell_2+c\ell_2\nabla\ell_3,
\qquad
A_2=a\ell_3\nabla\ell_1+c\ell_1\nabla\ell_3,
\qquad
A_3=a\ell_2\nabla\ell_1+b\ell_1\nabla\ell_2.
$$
As for any Kahan-Hirota-Kimura discretization we have
\begin{equation}
\nn
\det \left( \frac{\partial \x}{\partial x }\right)=\frac{\det(I+\epsilon \t f')}{\det(I-\epsilon f')}.
\end{equation}
Using Sylvester's determinant formula we obtain
\begin{equation}
\det(I-\epsilon f')=(1-\epsilon\nabla\ell_1^{\T}JA_1)(1-\epsilon\nabla\ell_2^{\T}B_2JA_2)
(1-\epsilon\nabla\ell_3^{\T}B_1JA_3),
\label{1}
\end{equation}
where 
$$
B_1=\left(I-\epsilon J(A_1\nabla\ell_1^{\T}+A_2\nabla\ell_2^{\T})\right)^{-1},\qquad
B_2=\left(I-\epsilon J A_1\nabla\ell_1^{\T}\right)^{-1},\nonumber
$$
or, more explicitly (use the Sherman-Morrison formula),
\begin{align}
B_1&=I+\epsilon J(\eta_{11}A_1\nabla\ell_1^{\T}+\eta_{12}A_1\nabla\ell_2^{\T}
+\eta_{21}A_2\nabla\ell_1^{\T}+\eta_{22}A_2\nabla\ell_2^{\T}),\nn \\
B_2&=I+\frac{\epsilon JA_1\nabla\ell_1^{\T}}{1-\epsilon \nabla\ell_1^{\T}JA_1},
\nn
\end{align}
where
\begin{align}
\eta_{11}&=\frac{1-\epsilon\nabla\ell_2^{\T}JA_2}{\Delta},
&\eta_{12}&=\frac{\epsilon\nabla\ell_1^{\T}JA_2}{\Delta},\nn \\
\eta_{21}&=\frac{\epsilon\nabla\ell_2^{\T}JA_1}{\Delta},
&\eta_{22}&=\frac{1-\epsilon\nabla\ell_1^{\T}JA_1}{\Delta}, \nn
\end{align}
with
\begin{equation}
\Delta=1-\epsilon(\nabla\ell_1^{\T}JA_1+\nabla\ell_2^{\T}JA_2)
+\epsilon^2(\nabla\ell_1^{\T}JA_1\nabla\ell_2^{\T}JA_2-\nabla\ell_1^{\T}JA_2\nabla\ell_2^{\T}JA_1).\nn
\end{equation}
Replacing $(x,\epsilon)$ by $(\x,-\epsilon)$ in (\ref{1}) we obtain $\det(I+\epsilon \t f')$. 
Note that the expressions for $\det(I-\epsilon f')$ and  $\det(I+\epsilon \t f')$ are rational functions in the variables $\epsilon,\ell_1,\ell_2,\ell_3$ and $\epsilon,\t \ell_1,\t \ell_2,\t \ell_3$ respectively, i.e.,
$
 \det(I+\epsilon \t f')=\widetilde{P}/\widetilde{Q}$
 and $
 \det(I-\epsilon f')=P/Q
$
for polynomials $\widetilde{P},\widetilde{Q}\in\Q[\epsilon,\t \ell_1,\t \ell_2,\t \ell_3]$ and ${P},{Q}\in\Q[\epsilon,\ell_1,\ell_2,\ell_3]$. 
 Now, using a computer algebra program like Maple it can be checked that the polynomial
$
\widetilde{P}Q\ell_1\ell_2\ell_3-P\widetilde{Q} \, \t \ell_1\t \ell_2\t \ell_3
$
is contained in the ideal $I\subset\Q[\epsilon,\ell_1,\ell_2,\ell_3,\t \ell_1,\t \ell_2,\t \ell_3]$ generated by (\ref{nahm_e1}--\ref{nahm_e3}), i.e.,
$$
\frac{\det(I+\epsilon \t f')}{\det(I-\epsilon f')}=\frac{\t \ell_1\t \ell_2\t \ell_3}{\ell_1\ell_2\ell_3}
$$
is an algebraic identity. This proves the claim.

\smallskip

\noindent (ii) In this case we find that the discrete equations of motion can be written as
\beq
\label{nahm113_HK}
\x-x=\epsilon J\left((3\alpha(\ell_1\t \ell_2+\t \ell_1\ell_2)+2\beta\ell_2\t \ell_2)\nabla\ell_1+(3\beta(\ell_1\t \ell_2+\t \ell_1\ell_2)+2\alpha\ell_1\t \ell_1)\nabla\ell_2\right).
\eeq
Then multiplying (\ref{nahm113_HK}) from the left by $\nabla\ell_i^{\T}$, $i=1,2$, we obtain
\begin{align}
\t \ell_1-\ell_1&=\epsilon\lambda_{12}(3\beta(\ell_1\t \ell_2+\t \ell_1\ell_2)+2\alpha\ell_1\t \ell_1),\label{nahm112_e1}\\
\t \ell_2-\ell_2&=-\epsilon\lambda_{12}(3\alpha(\ell_1\t \ell_2+\t \ell_1\ell_2)+2\beta\ell_2\t \ell_2).\label{nahm112_e2}
\end{align}
Now, using (\ref{nahm112_e1}), (\ref{nahm112_e2}) it can be verified that (\ref{nahm112_Heps}) is indeed a conserved quantity of the Kahan-Hirota-Kimura map.

\smallskip

\noindent (iii) In this case we find that the discrete equations of motion can be written as
\beq
\label{nahm112_HK}
\x-x=\epsilon J\left((4\alpha(\ell_1\t \ell_2+\t \ell_1\ell_2)+2\beta\ell_2\t \ell_2)\nabla\ell_1+(5\beta(\ell_1\t \ell_2+\t \ell_1\ell_2)+4\alpha\ell_1\t \ell_1)\nabla\ell_2\right).
\eeq
Then multiplying (\ref{nahm112_HK}) from the left by $\nabla\ell_i^{\T}$, $i=1,2$, we obtain
\begin{align}
\t \ell_1-\ell_1&=\epsilon\lambda_{12}(5\beta(\ell_1\t \ell_2+\t \ell_1\ell_2)+4\alpha\ell_1\t \ell_1),\label{nahm123_e1}\\
\t \ell_2-\ell_2&=-\epsilon\lambda_{12}(4\alpha(\ell_1\t \ell_2+\t \ell_1\ell_2)+2\beta\ell_2\t \ell_2).\label{nahm123_e2}
\end{align}
Now, using (\ref{nahm123_e1}), (\ref{nahm123_e2}) it can be verified that (\ref{nahm123_Heps}) is indeed a conserved quantity of the Kahan-Hirota-Kimura map.
\end{proof}

We remark that there exist at least other two remarkable choices for the parameters $a,b,c$. The first one is give by $(a,b,c)=(1,1,1)$. The Kahan-Hirota-Kimura map in this case admits a conserved quantity as well, but such a case is covered by Proposition 3 in
\cite{C1}. The second one is $(a,b,c)=(1,1,-1)$. We will see that such case is contained in Proposition \ref{prop:1}.

\section{Higher-dimensional Nambu systems}

An important class of three-dimensional vector fields is given by Nambu
systems \cite{nambu}:
\beq
\dot x =\nabla H \times \nabla K,
\label{fd}
\eeq
where $H,K\colon\R^n\rightarrow\R$. In these coordinates one has a divergenceless vector field, that means that the canonical
measure is preserved by the flow. The famous Euler top
is an example of Nambu mechanics.
The integrability properties of the Kahan-Hirota-Kimura discretization of Nambu systems in $\R^3$, with $H,K$ being quadratic polynomials, has been studied in \cite{C2} and \cite{HP}.

In this paper we consider the following superintegrable $n$-dimensional generalization of (\ref{fd}).
Let $H,K\colon\R^n\rightarrow\R$ be two functionally independent quadratic homogeneous polynomials and
$v_1,\dotsc,v_{n-3}\in\R^n$ be $n-3$ linearly independent vectors in $\R^n$.
Then define the divergenceless system of quadratic differential equations given by
\begin{equation}
\label{vf_gen_nabmu}
\dot x=\det(\e,v_1,\dotsc,v_{n-3},\nabla H,\nabla K),
\end{equation}
where $\e=(\e_1,\dotsc,\e_n)^{\T}$, $\e_i$ being the $i$-th unit basis vector in $\R^n$. Equivalently, one has
$$
\dot x =\left( \det A_1, \dots, (-1)^{i+1}\det A_i, \dots, (-1)^{n+1}\det A_n \right),
$$
 where $A_i$ denotes the matrix obtained from $(v_1,\dotsc,v_{n-3},\nabla H,\nabla K)$ by deleting the $i$-th row. It is easy to see that (\ref{vf_gen_nabmu})
 is a superintegrable system. In the generic case (i.e. $\dot x\nequiv 0$), it admits $H$, $K$ and 
 \beq
V_i= \sum_{k=1}^n (v_i)_k x_k, \qquad i=1,\dots,n-3. \label{ff}
\eeq
as $n-1$ functionally independent integrals of motion.

 \begin{ex}
The periodic Volterra chain with $N=4$ is governed by the following system of differential equations:
\begin{equation}
\label{vf_PVC4}
\left\{ \begin{array}{l} 
\dot x_1=x_1(x_2-x_4),\vspace{.1truecm}\\
\dot x_2=x_2(x_3-x_1),\vspace{.1truecm}\\
\dot x_3=x_3(x_4-x_2),\vspace{.1truecm}\\
\dot x_4=x_4(x_1-x_3).
\end{array} \right.
\end{equation}
This system is completely integrable and possesses three functionally independent integrals of motion: $H=x_1x_3$ and $K=x_2x_4$ and $V_1=x_1+x_2+x_3+x_4$.
The vector field (\ref{vf_PVC4}) can be expressed as 
\begin{equation*}
\dot x=\det(\e,v_1,\nabla H,\nabla K),
\end{equation*}
where $v_1=\nabla V_1$ and  $\e=(\e_1,\dotsc,\e_4)^{\T}$.
\end{ex}

We now turn our attention to the Kahan-Hirota-Kimura
discretization of system (\ref{vf_gen_nabmu}).

\begin{prop}
\label{prop:4}
The following claims are true.
\begin{enumerate}

\item[(i)] Assume that $H=\ell_1 \ell_2$, where $\ell_1,\ell_2\colon\R^n\rightarrow\R$ are linear forms. Then the
Kahan-Hirota-Kimura map discretizing system (\ref{vf_gen_nabmu}) admits
$$
H_{\epsilon}=\frac{\ell_1\ell_2}{1-\epsilon^2\Delta^2},
$$
where $\Delta=\det(\nabla \ell_1,v_1,\dotsc,v_{n-3},\nabla \ell_2,\nabla K)$ and $V_i$, $i=1,\dots,n-3$, given in (\ref{ff}), as $n-2$ functionally independent conserved quantities.

\smallskip

\item[(ii)] Assume that $H=\ell_1 \ell_2$, $K=\ell_3 \ell_4$, where $\ell_1,\ell_2,\ell_3,\ell_4 \colon\R^n\rightarrow\R$ are linear forms. Then the
Kahan-Hirota-Kimura map discretizing system (\ref{vf_gen_nabmu}) admits
$$
H_{\epsilon}=\frac{\ell_1\ell_2}{1-\epsilon^2(\ell_3\Delta_{124}+\ell_4\Delta_{123})^2}, \qquad
K_\ep=\frac{\ell_3\ell_4}{1-\epsilon^2(\ell_1\Delta_{234}+\ell_2\Delta_{134})^2},
$$
where $\Delta_{ijk}=\det(\nabla \ell_i,v_1,\dotsc,v_{n-3},\nabla \ell_j,\nabla \ell_k)$
and $V_i$, $i=1,\dots,n-3$, given in (\ref{ff}), as $n-1$ functionally independent  conserved quantities. In such case the Kahan-Hirota-Kimura map admits the invariant measure form
\begin{equation*}
\Omega=\frac{\mathrm dx_1\wedge\mathrm dx_2\wedge\dotsb\wedge\mathrm dx_n}{\ell_1\ell_2\ell_3\ell_4}.
\end{equation*}

\end{enumerate}
\end{prop}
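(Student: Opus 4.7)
The plan is to mimic the approach used in the proof of Proposition~\ref{prop:nahm_measure}: write the Kahan--Hirota--Kimura discretization explicitly, multiply by the gradient covectors $\nabla\ell_i^{\T}$ and by the constant covectors $v_j^{\T}$, and exploit the vanishing of determinants with a repeated row in order to extract clean scalar relations. The conserved quantities will then follow by direct algebra, and the invariant measure in~(ii) will be obtained from Sylvester's formula combined with a polynomial identity that is verified modulo the discrete equations.

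For part~(i), expanding $\nabla H=\ell_1\nabla\ell_2+\ell_2\nabla\ell_1$ writes the vector field as $\dot x=\ell_1 W_2+\ell_2 W_1$ with $W_i=\det(\e,v_1,\dots,v_{n-3},\nabla\ell_i,\nabla K)$. The Kahan--Hirota--Kimura symmetrization of each bilinear product, followed by multiplication by $\nabla\ell_1^{\T}$ and $\nabla\ell_2^{\T}$, kills the $W_1$- and $W_2$-terms respectively because of the repeated-row cancellations, and leaves
\begin{align*}
\t\ell_1-\ell_1&=\ep\bigl(\ell_1\t\Delta+\t\ell_1\Delta\bigr),\\
\t\ell_2-\ell_2&=-\ep\bigl(\ell_2\t\Delta+\t\ell_2\Delta\bigr).
\end{align*}
Solving each relation for the ratio $\t\ell_i/\ell_i$ and multiplying yields $\t\ell_1\t\ell_2(1-\ep^2\Delta^2)=\ell_1\ell_2(1-\ep^2\t\Delta^2)$, i.e.\ conservation of $H_\ep$. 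Each $V_i$ is invariant because multiplying the discrete equation by $v_i^{\T}$ produces a determinant that already contains the row $v_i$ and therefore vanishes identically. Generic functional independence of $H_\ep$ and $V_1,\dots,V_{n-3}$ is inherited from the continuous case.

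For part~(ii) the plan is to also expand $\nabla K=\ell_3\nabla\ell_4+\ell_4\nabla\ell_3$, so that the vector field becomes a sum of four bilinear contributions. Pairing the symmetrized discrete equations with $\nabla\ell_1^{\T}$ and $\nabla\ell_2^{\T}$ gives precisely the same pair of scalar relations as in~(i), but with $\Delta$ replaced by $A=\ell_3\Delta_{124}+\ell_4\Delta_{123}$; this yields the conservation of $H_\ep$. Pairing with $\nabla\ell_3^{\T}$ and $\nabla\ell_4^{\T}$ produces, by the same vanishing pattern, an analogous pair of relations with $B=\ell_1\Delta_{234}+\ell_2\Delta_{134}$ playing the role of $\Delta$, and hence the conservation of $K_\ep$. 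As above, pairing with $v_i^{\T}$ gives $\t V_i=V_i$.

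For the invariant measure the plan is to use the standard Kahan--Hirota--Kimura identity $\det(\partial\t x/\partial x)=\det(I+\ep\t f')/\det(I-\ep f')$. A direct differentiation gives the rank decomposition $f'=\sum_{i=1}^{4}U_i\nabla\ell_i^{\T}$, with $U_1,U_2$ collecting the $\nabla K$-contributions and $U_3,U_4$ collecting the $\nabla H$-contributions. Sylvester's formula then reduces the $n\times n$ determinant to the $4\times 4$ determinant of the matrix $M_{ij}=\nabla\ell_i^{\T}U_j$, most of whose entries vanish by the same repeated-row argument, the remaining entries involving only $A$, $B$, the linear forms $\ell_i$ and the constants $\Delta_{ijk}$. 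What is left is to verify that
$$
\det(I-\ep f')\,\t\ell_1\t\ell_2\t\ell_3\t\ell_4 \;=\; \det(I+\ep\t f')\,\ell_1\ell_2\ell_3\ell_4
$$
holds modulo the ideal generated by the four scalar discrete equations for $\t\ell_1,\ldots,\t\ell_4$. This is the main obstacle: the two sides are rational expressions that do not obviously factor, and, exactly as in the proof of Proposition~\ref{prop:nahm_measure}(i), the natural way to settle the identity is a computer-algebra check.
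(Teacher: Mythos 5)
Your proposal is correct and follows essentially the same route as the paper: the same scalar relations obtained by pairing the discretized equations with $\nabla\ell_i^{\T}$ (yielding $\t \ell_1(1-\epsilon\Delta)=\ell_1(1+\epsilon\t \Delta)$ and its companions, respectively the analogues with $\ell_3\Delta_{124}+\ell_4\Delta_{123}$ and $\ell_1\Delta_{234}+\ell_2\Delta_{134}$), the same multiplication of pairs of relations to get $H_\epsilon$ and $K_\epsilon$, and the same Sylvester-formula reduction plus verification modulo the discrete equations for the measure. The only cosmetic difference is that you reduce to a $4\times 4$ determinant directly instead of iterating the Sherman--Morrison formula.
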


\begin{proof} We prove both claims.

\smallskip

\noindent (i) 
The Kahan-Hirota-Kimura map is given by
\beq
\label{HKf_4}
\begin{split}
\x-x=&\epsilon\left(\det\left(\e,v_1,\dotsc,v_{n-3},\ell_2\nabla \ell_1+\ell_1\nabla \ell_2,\t {\nabla K}\right)\right.\\&\left.+\det\left(\e,v_1,\dotsc,v_{n-3},\t \ell_2\nabla \ell_1+\t \ell_1\nabla \ell_2,\nabla K\right)\right).
\end{split}
\eeq
Now, multiplying (\ref{HKf_4}) from the left by $\nabla \ell_i^{\T}$, $i=1,2$, yields
\begin{align*}
\t \ell_1(1-\epsilon\Delta)=\ell_1(1+\epsilon\t \Delta),\\
\t \ell_2(1+\epsilon\Delta)=\ell_2(1-\epsilon\t \Delta),
\end{align*}
which justifies the claim. Concerning functions (\ref{ff}), we note that the Kahan-Hirota-Kimura
discretization preserves all linear integrals.

\smallskip

\noindent (ii) 
In this we find that case the discrete equations of motion can be written as
\beq
\label{HKf_5}
\begin{split}
\frac{\x-x}{\epsilon}=&(\ell_1\t \ell_3+\t \ell_1\ell_3)\Delta_{24}+(\ell_1\t \ell_4+\t \ell_1\ell_4)\Delta_{23}+
(\ell_2\t \ell_3+\t \ell_2\ell_3)\Delta_{14}+(\ell_2\t \ell_4+\t\ell_2\ell_4)\Delta_{13},
\end{split}
\eeq
where $\Delta_{ij}=\det(\e,v_1,\dotsc,v_{n-3},\nabla\ell_i,\nabla\ell_j)$.
Then multiplying (\ref{HKf_5}) from the left by the vectors $\nabla \ell_i^{\T}$, $i=1,\dotsc,4$, yields
\begin{align}
\t \ell_1(1-\epsilon(\ell_3\Delta_{124}+\ell_4\Delta_{123}))
&=\ell_1(1+\epsilon(\t \ell_3\Delta_{124}+\t \ell_4\Delta_{123})),\label{id_l1_2}\\
\t \ell_2(1+\epsilon(\ell_3\Delta_{124}+\ell_4\Delta_{123}))
&=\ell_2(1-\epsilon(\t \ell_3\Delta_{124}+\t \ell_4\Delta_{123})),\label{id_l2_2}\\
\t \ell_3(1+\epsilon(\ell_1\Delta_{234}+\ell_2\Delta_{134}))
&=\ell_3(1-\epsilon(\t \ell_1\Delta_{234}+\t \ell_2\Delta_{134})),\label{id_l3_2}\\
\t \ell_4(1-\epsilon(\ell_1\Delta_{234}+\ell_2\Delta_{134}))
&=\ell_4(1+\epsilon(\t \ell_1\Delta_{234}+\t \ell_2\Delta_{134})).\label{id_l4_2}
\end{align}

Computing the Jacobian of the vector field (\ref{vf_gen_nabmu}) we obtain
\begin{equation*}
f'=A_1\nabla\ell_1^{\T}+A_2\nabla\ell_2^{\T}+A_3\nabla\ell_3^{\T}+A_4\nabla\ell_4^{\T},
\end{equation*}
where the vectors $A_i$, $i=1,\dotsc,4$ are defined by
\begin{align*}
A_1&=\ell_3\Delta_{24}+\ell_4\Delta_{23}, &A_2&=\ell_3\Delta_{14}+\ell_4\Delta_{13},\\
A_3&=\ell_1\Delta_{24}+\ell_2\Delta_{14}, &A_4&=\ell_1\Delta_{23}+\ell_2\Delta_{13}.
\end{align*}
As for any Kahan-Hirota-Kimura discretization we have
\begin{equation}
\nn
\det \left( \frac{\partial \x}{\partial x }\right)=\frac{\det(I+\epsilon \t f')}{\det(I-\epsilon f')}.
\end{equation}
Using Sylvester's determinant formula we obtain
\begin{equation*}
\det(I-\epsilon f')=(1-\epsilon\nabla\ell_1^{\T}A_1)(1-\epsilon\nabla\ell_2^{\T}B_3A_2) (1-\epsilon\nabla\ell_3^{\T}B_2A_3)(1-\epsilon\nabla\ell_4^{\T}B_1A_4),
\end{equation*}
where
\begin{gather*}
B_1=\left(I-\epsilon(A_1\nabla\ell_1^{\T}+A_2\nabla\ell_2^{\T}+A_3\nabla\ell_3^{\T})\right)^{-1},\\
B_2=\left(I-\epsilon(A_1\nabla\ell_1^{\T}+A_2\nabla\ell_2^{\T})\right)^{-1},\quad
B_3=\left(I-\epsilon A_1\nabla\ell_1^{\T}\right)^{-1}.
\end{gather*}
As in the proof of Proposition \ref{prop:nahm_measure} the matrices $B_i$, $i=1,2,3$, can be computed using the Sherman-Morrison formula. 
Replacing $(x,\epsilon)$ by $(\x,-\epsilon)$ we obtain $\det(I+\epsilon \t f')$. Note that the expressions for $\det(I-\epsilon f')$ and $\det(I+\epsilon \t f')$ are rational functions in the variables $\epsilon,\ell_1,\dotsc,\ell_4$ and $\epsilon,\t \ell_1,\dotsc\t \ell_4$ respectively. Finally, using the equations 
(\ref{id_l1_2}--\ref{id_l4_2}) it can be verified that
\begin{equation*}
\frac{\det(I+\epsilon \t f')}{\det(I-\epsilon f')}=\frac{\t \ell_1\t \ell_2\t \ell_3\t \ell_4}{\ell_1\ell_2\ell_3\ell_4}
\end{equation*}
is an algebraic identity. This proves the claim.
\end{proof}

\section{Classes with exact preservation of integrals}

This section is devoted to two families of systems of quadratic ordinary differential equations in $\R^n$ which admit a rational conserved quantity. We will show that the corresponding  Kahan-Hirota-Kimura discretizations
preserve exactly such integrals of motion.

The first family of differential equations is given by
\begin{equation}
\label{vf_1}
\dot x=\ell^2J\nabla\left(\frac{H}{\ell}\right),
\end{equation}
where $\ell\colon\R^n\rightarrow\R$ is a linear
form, $H\colon\R^n\rightarrow\R$ is a quadratic homogeneous polynomial and  $J\in {\rm{Mat}}_{n \times n} (\R)$ is skew-symmetric. System (\ref{vf_1}) admits
the rational function
\begin{equation}
\label{F1}
F=\frac{H}{\ell}
\end{equation}
as integral of motion.

The second family of differential equations we want to consider is given by
\begin{equation}
\label{vf_2}
\dot x=\det\left(\e,v_1,\dotsc,v_{n-3},\ell_2^2\nabla\left(\frac{\ell_1}{\ell_2}\right),\nabla H\right) - \alpha H \det(\nabla\ell_1,v_1,\dotsc,v_{n-3},\nabla\ell_2,\e),
\end{equation}
with parameter $\alpha\in\R$,
$\ell_1,\ell_2\colon\R^n\rightarrow\R$ linear forms, $H\colon\R^n\rightarrow\R$ quadratic homogeneous polynomial and $v_1,\dotsc,v_{n-3}\in\R^n$ are constant vectors such that $\nabla \ell_1,\nabla \ell_2,v_1,\dotsc,v_{n-3}$ are linearly independent. 
System (\ref{vf_2}) admits
the rational function
\begin{equation}
\label{F2}
F=\frac{\ell_1}{\ell_2}
\end{equation}
as integral of motion. Moreover the linear forms
\begin{equation}
\label{F3}
V_i = \sum_{k=1}^n (v_i)_k x_k, \qquad i=1,\dots,n-3,
\end{equation}
provide $n-3$ additional functionally independent integrals of motion. Therefore system (\ref{vf_2}) is superintegrable.
Additionally, if $\alpha=1$, system (\ref{vf_2}) admits the rational function 
\begin{equation}
\label{F4}
G=\frac{H}{\ell_1}
\end{equation}
as integral of motion.

\begin{ex}
A remarkable three-dimensional example belonging to class (\ref{vf_2})
is given by
\beq
\left\{ \begin{array}{l}
\dot{x}_1= \displaystyle{\frac12(a_1+a_2+a_3)x_1^2 +\frac12 (a_1-a_2-a_3)(x_1-x_2)(x_1-x_3)}, \vspace{.2truecm} \\
\dot{x}_2= \displaystyle{\frac12(a_1+a_2+a_3)x_2^2 +\frac12 (a_2-a_3-a_1)(x_2-x_3)(x_2-x_1)}, \vspace{.2truecm} \\
\dot{x}_3=\displaystyle{\frac12(a_1+a_2+a_3)x_3^2 +\frac12 (a_3-a_1-a_2)(x_3-x_1)(x_3-x_2)},
\end{array} \right.
\label{h3bid}
\eeq
where $a_1,a_2,a_3$ are  parameters. 
Indeed, system
(\ref{h3bid}) is a degenerate subcase of the famous Halphen system (1881) \cite{H1}:
\beq
\left\{ \begin{array}{l}
\dot{x}_1= a_1 x_1^2+ (\lambda-a_1)
(x_1x_2-x_2x_3 +x_3x_1), \vspace{.2truecm} \\
\dot{x}_2= a_2 x_2^2+ (\lambda-a_2)
(x_2x_3-x_3x_1 +x_1x_2), \vspace{.2truecm} \\
\dot{x}_3= a_3 x_3^2+ (\lambda-a_3)
(x_3x_1-x_1x_2 +x_2x_3),
\end{array} \right.
\label{h1}
\eeq
where $\lambda$ is an additional parameter. 
The general solution of (\ref{h1}) 
can be written in terms of hypergeometric functions
provided that
$
\lambda\neq 0, \,a_1+a_2+a_3\neq 2\lambda
$ \cite{H1}.
It turns out that (\ref{h3bid}) comes from (\ref{h1}) 
with the choice
$\lambda=(a_1+a_2+a_3)/2$ with 
$a_1+a_2+a_3\neq0$, which evidently violates the previous condition. Obviously, such choice renders 
system (\ref{h3bid}) much simpler than the original Halphen system (\ref{h1}).

By straightforward inspection one can see that
(\ref{h3bid}) admits two functionally independent rational
integrals of motion of the form
\beq
F_{ij}=\frac{a_1(x_1x_2-x_2x_3 +x_3x_1)+a_2(x_2x_3-x_3x_1 +x_1x_2)+a_3(x_3x_1-x_1x_2 +x_2x_3)}{x_i-x_j}, \label{sfgfh}
\eeq
where $i,j=1,2,3$, $i\neq j$. Incidentally we observe that system 
(\ref{h3bid}) has been explicitly considered in \cite{Ma}. Here the author claims that the parameter-independent quantities
$(x_1-x_2)/(x_2-x_3)$ and $(x_2-x_3)/(x_3-x_1)$
are integrals of motion of (\ref{h3bid}). Although such a  statement is correct, these integrals are functionally dependent, so that nothing can be argued about the complete integrability of the system. On the contrary, the existence of two independent integrals of motion (\ref{sfgfh}) allows one to find a (rank 2) bi-Hamiltonian structure of (\ref{h3bid}), thus proving its complete integrability in the Arnold-Liouville sense. A representative polynomial member of such a Poisson pencil is the following one:
\beq \label{sfgdh} 
\{x_1,x_2\}=(x_1-x_2)(x_2-x_3), \;
\{x_2,x_3\}=(x_2-x_3)^2,\;
\{x_3,x_1\}=(x_2-x_3)(x_3-x_1). 
\eeq
It turns out that (\ref{h3bid}) is Hamiltonian with respect to (\ref{sfgdh}) with Hamiltonian $F_{23}/2$ while $F_{23}/F_{31}=(x_3-x_1)/(x_2-x_3)$  is a Casimir function.

We conclude by saying that system (\ref{h3bid}) belongs to the family of vector fields (\ref{vf_2}) by setting $n=3$,
$\ell_1=x_3-x_1$, $\ell_2=x_2-x_3$, $H=(x_1-x_2)F_{12}$ and $\alpha=1$.
\end{ex}

The following statements show that the 
Kahan-Hirota-Kimura maps discretizing systems
(\ref{vf_1}) and (\ref{vf_2}) preserve exactly the 
continuous integrals of motion.

\begin{prop}
\label{prop:1}
The Kahan-Hirota-Kimura discretization of system
(\ref{vf_1}) admits (\ref{F1}) as conserved quantity.
\end{prop}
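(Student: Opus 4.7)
The plan is to derive three algebraic identities by contracting the Kahan-Hirota-Kimura equation against three carefully chosen covectors, and then eliminate the auxiliary quantities to force $\ell\,\t H = \t\ell\, H$, which is precisely the conservation of $F = H/\ell$. Before contracting I would unfold the vector field as $f(x) = \ell\, J\nabla H - H\, J\nabla\ell$. Writing $u := \nabla\ell$ for the constant vector and $M$ for the constant symmetric Hessian of $H$ (so that $\nabla H(x) = Mx$ and $H(x) = \tfrac12 x^{\T}Mx$), the field $f$ is visibly quadratic and its symmetric bilinear polarization can be read off directly, giving the KHK map in the explicit form
$$
\x - x = \epsilon\bigl[\ell\, J\t{\nabla H} + \t\ell\, J\nabla H - (x^{\T}M\x)\, Ju\bigr].
$$

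The three contractions are against $\nabla\ell^{\T}=u^{\T}$, against $\nabla H(x)^{\T}=x^{\T}M$, and against $\t{\nabla H}^{\T}=\x^{\T}M$. The skew-symmetry of $J$ forces $u^{\T}Ju = 0$, and the skew-symmetry of $MJM$ (which follows from $M^{\T}=M$ and $J^{\T}=-J$) forces $x^{\T}MJMx = \x^{\T}MJM\x = 0$. Introducing the scalars $\mu := u^{\T}JMx$, $\t\mu := u^{\T}JM\x$, $\nu := x^{\T}MJM\x$ and recognising $x^{\T}M\x = 2H(x,\x)$ as twice the symmetric polarization of $H$, the three contractions collapse to
\begin{align*}
\t\ell - \ell &= \epsilon(\ell\t\mu + \t\ell\mu), \\
2H(x,\x)(1 - \epsilon\mu) &= 2H + \epsilon\ell\nu, \\
2H(x,\x)(1 + \epsilon\t\mu) &= 2\t H + \epsilon\t\ell\nu.
\end{align*}

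Finally I would eliminate $H(x,\x)$ and $\nu$ simultaneously. The first identity rearranges to $\t\ell(1 - \epsilon\mu) = \ell(1 + \epsilon\t\mu)$, while the ratio of the other two yields
$$
\frac{1 - \epsilon\mu}{1 + \epsilon\t\mu} \;=\; \frac{2H + \epsilon\ell\nu}{2\t H + \epsilon\t\ell\nu} \;=\; \frac{\ell}{\t\ell}.
$$
Cross-multiplying the outer equality, the $\epsilon\ell\t\ell\nu$ terms cancel and one is left with exactly $\ell\,\t H = \t\ell\, H$, proving the claim. The only delicate step in the whole plan is the sign bookkeeping produced by the two skew-symmetries (in particular that $\x^{\T}MJMx = -\nu$); beyond that the argument is purely linear-algebraic, and no additional structural input (Hamiltonian form, integrability, etc.) is required.
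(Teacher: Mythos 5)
Your proof is correct and follows essentially the same route as the paper: both contract the Kahan--Hirota--Kimura equation against $\nabla\ell^{\T}$, $\nabla H^{\T}$ and $\t{\nabla H}^{\T}$, exploit the skew-symmetry of $J$ and of $MJM$, and then eliminate the auxiliary bilinear quantities to reach $\ell\,\t H=\t\ell\,H$. The only (cosmetic) difference is that the paper pre-multiplies the two Hessian contractions by $\t\ell$ and $\ell$ and adds them so the $\nabla H^{\T}J\t{\nabla H}$ terms cancel directly, whereas you eliminate $\nu$ and the polarization $H(x,\x)$ by taking a ratio --- which implicitly assumes $H(x,\x)\neq 0$ and $1+\epsilon\t\mu\neq 0$, a harmless genericity issue since the final relation is polynomial.
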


\begin{proof}
Here and below we use the notation $\Hxx=x^{\T}(\nabla^2H)x$, $\HXx=\x^{\T}(\nabla^2H)x$ and $\HXX=\x^{\T}(\nabla^2H)\x$, where
$\nabla^2H$ denotes the Hessian of a function $H\colon\R^n\rightarrow\R$.

The Kahan-Hirota-Kimura map is given by
\beq
\label{HKf_1}
\x-x=\epsilon\left(\ell J\t {\nabla H}+\t \ell J\nabla H-\HXx J\nabla \ell\right).
\eeq
Now, multiplying (\ref{HKf_1}) from the left by $\ell \t {\nabla H^{\T}}$, $\t \ell\nabla H^{\T}$ and $\nabla \ell^{\T}$ we obtain
\begin{align}
\ell\HXX-\ell\HXx&=\epsilon\left(\ell\t \ell\t {\nabla H^{\T}}J\nabla H-\ell\HXx\t {\nabla H^{\T}}J\nabla \ell\right),\label{B}\\
\t \ell\HXx-\t \ell\Hxx&=\epsilon\left(\ell\t \ell\nabla H^{\T}J\t {\nabla H}-\t \ell\HXx\nabla H^{\T}J\nabla \ell\right),\label{A}\\
\t \ell-\ell&=\epsilon\left(\ell\nabla \ell^{\T}J\t {\nabla H}+\t \ell\nabla \ell^{\T}J\nabla H\right).\label{C}
\end{align}
Finally, adding (\ref{A}) to (\ref{B}) and substituting (\ref{C}) we obtain $\ell\HXX=\t \ell\Hxx$, which gives the claim.
\end{proof}

\begin{prop}
\label{prop:2}
The following claims are true.

\begin{enumerate}

\item[(i)] The Kahan-Hirota-Kimura discretization of system
(\ref{vf_2}) admits (\ref{F2}) and (\ref{F3}) as $n-2$ functionally independent conserved quantities.

\smallskip

\item[(ii)] The Kahan-Hirota-Kimura discretization of system
(\ref{vf_2}) with $\alpha=1$ admits (\ref{F4}) as additional conserved quantity.

\end{enumerate}
\end{prop}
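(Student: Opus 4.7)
The plan is to follow the strategy used in the proof of Proposition \ref{prop:1}, systematically exploiting the alternating properties of the $(n+1)$-column determinants appearing in (\ref{vf_2}). First I would rewrite the vector field using the identity $\ell_2^2\nabla(\ell_1/\ell_2)=\ell_2\nabla\ell_1-\ell_1\nabla\ell_2$, so that
$$
\dot x=\ell_2\Delta_1-\ell_1\Delta_2-\alpha H\,\Delta_{12},
$$
where $\Delta_i(x)=\det(\e,v_1,\dotsc,v_{n-3},\nabla\ell_i,\nabla H(x))$ is linear in $x$ and $\Delta_{12}=\det(\e,v_1,\dotsc,v_{n-3},\nabla\ell_1,\nabla\ell_2)$ is a constant vector. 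Since the resulting $f(x)$ is quadratic homogeneous, its Kahan-Hirota-Kimura discretization polarizes to
$$
\x-x=\ep\bigl(\ell_2\t\Delta_1+\t\ell_2\Delta_1-\ell_1\t\Delta_2-\t\ell_1\Delta_2-\alpha\HXx\Delta_{12}\bigr).
$$

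For claim (i), I would multiply this equation on the left by $\nabla\ell_1^\T$ and by $\nabla\ell_2^\T$. Each scalar $\nabla\ell_j^\T\Delta_i$ or $\nabla\ell_j^\T\Delta_{12}$ is a determinant whose columns include $\nabla\ell_j$ twice whenever $i=j$, hence vanishes; setting $\mu(x)=\det(\nabla\ell_1,v_1,\dotsc,v_{n-3},\nabla\ell_2,\nabla H(x))$, one finds $\nabla\ell_1^\T\Delta_2=\mu$ and $\nabla\ell_2^\T\Delta_1=-\mu$. This produces
$$
\t\ell_1(1+\ep\mu)=\ell_1(1-\ep\t\mu),\qquad \t\ell_2(1+\ep\mu)=\ell_2(1-\ep\t\mu),
$$
and dividing yields $\t\ell_1/\t\ell_2=\ell_1/\ell_2$. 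Preservation of each $V_i$ follows by multiplying the KHK equation by $v_i^\T$: every term then acquires a determinant with $v_i$ in two columns, so the right-hand side is zero.

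For claim (ii), I would mimic the proof of Proposition \ref{prop:1} by multiplying the KHK equation on the left by $\ell_1\,\t{\nabla H}^\T$ and by $\t\ell_1\,\nabla H^\T$, producing $\ell_1(\HXX-\HXx)$ and $\t\ell_1(\HXx-\Hxx)$ on the respective left-hand sides. Denoting $K_i=\det(\t{\nabla H},v_1,\dotsc,v_{n-3},\nabla\ell_i,\nabla H)$ and $K_{12}(y)=\det(\nabla H(y),v_1,\dotsc,v_{n-3},\nabla\ell_1,\nabla\ell_2)$, the repeated-column vanishings reduce the right-hand sides to $\ep\ell_1[\t\ell_2 K_1-\t\ell_1 K_2-\alpha\HXx K_{12}(\x)]$ and $\ep\t\ell_1[-\ell_2 K_1+\ell_1 K_2-\alpha\HXx K_{12}(x)]$. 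Summing the two identities, the $K_1,K_2$ contributions collapse using $\ell_1\t\ell_2=\t\ell_1\ell_2$ from claim (i); a cyclic reordering of three columns shows $K_{12}(y)=\mu(y)$, so at $\alpha=1$ the remaining $K_{12}$ term equals $-\ep\HXx(\ell_1\t\mu+\t\ell_1\mu)=\HXx(\t\ell_1-\ell_1)$ by claim (i). Comparing with the summed left-hand side $\ell_1\HXX-\t\ell_1\Hxx+(\t\ell_1-\ell_1)\HXx$ then yields $\ell_1\HXX=\t\ell_1\Hxx$, which is equivalent to $\t H/\t\ell_1=H/\ell_1$.

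The main obstacle will be the careful bookkeeping of signs hidden in the $(n+1)$-column determinants, in particular the identification $K_{12}(y)=\mu(y)$, which depends on the parity of the cyclic permutation that moves $\nabla H$ to the last column while sending $\nabla\ell_1,\nabla\ell_2$ to positions $n-1$ and $n$; once this is settled, the entire argument reduces to elementary linear algebra and the already established identities of claim (i).
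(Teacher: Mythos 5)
Your proposal is correct and follows essentially the same route as the paper's proof: the paper likewise polarizes the determinantal vector field, multiplies on the left by $\nabla\ell_1^{\T},\nabla\ell_2^{\T}$ to get $\t\ell_1(1+\ep\Delta)=\ell_1(1-\ep\t\Delta)$ and its $\ell_2$-analogue for claim (i), and then by $\ell_1\t{\nabla H}^{\T}$ and $\t\ell_1\nabla H^{\T}$ for claim (ii), adding the two identities and using claim (i) to obtain $\ell_1\HXX=\t\ell_1\Hxx$. Your sign bookkeeping (including $K_{12}(y)=\mu(y)$ via an even $3$-cycle of columns) checks out, and your explicit treatment of the linear integrals $V_i$ is a harmless addition the paper leaves implicit.
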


\begin{proof} We prove all claims.

\smallskip

\noindent (i) The Kahan-Hirota-Kimura map is given by
\begin{eqnarray}
\frac{\x-x }{\epsilon}&=&\det(\e,v_1,\dotsc,v_{n-3},\t \ell_2\nabla\ell_1-\t \ell_1\nabla\ell_2,\nabla H) \label{HKf_2}\\
&&+\det(\e,v_1,\dotsc,v_{n-3},\ell_2\nabla\ell_1-\ell_1\nabla\ell_2,\t {\nabla H}) - \alpha\HXx \det(\nabla\ell_1,v_1,\dotsc,v_{n-3},\nabla\ell_2,\e). \nonumber
\end{eqnarray}
Now, multiplying (\ref{HKf_2}) from the left by $\nabla \ell_i^{\T}$, $i=1,2$, we obtain
\begin{align}
\t \ell_1-\ell_1 =& -\epsilon\left(\t \ell_1\Delta + \ell_1\t \Delta\right)\label{HKf_2_1},\\
\t \ell_2-\ell_2 =& -\epsilon\left(\t \ell_2\Delta + \ell_2\t \Delta\right)\label{HKf_2_2},
\end{align}
where $\Delta=\det(\nabla\ell_1,v_1,\dotsc,v_{n-3},\nabla\ell_2,\nabla H)$.
Thus, we see that $\t \ell_1\ell_2=\t \ell_2 \ell_1$.

\noindent (ii) Let $\alpha=1$.
Multiplying (\ref{HKf_2}) from the left by the vectors $\ell_1\t {\nabla H^{\T}}$ and $\t \ell_1\nabla H^{\T}$ we obtain
\begin{align}
\ell_1\HXX-\ell_1\HXx =& \epsilon\left(-\ell_1\det(\nabla H,v_1,\dotsc,v_{n-3},\t \ell_2\nabla\ell_1-\t \ell_1\nabla\ell_2,\t {\nabla H})-\ell_1\HXx\t {\Delta}\right),\label{HKf_2_3}\\
\t \ell_1\HXx-\t \ell_1\Hxx =& \epsilon\left(\t \ell_1\det(\nabla H,v_1,\dotsc,v_{n-3},\ell_2\nabla\ell_1-\ell_1\nabla\ell_2,\t {\nabla H})-\t \ell_1\HXx\Delta\right)\label{HKf_2_4}.
\end{align}
Adding (\ref{HKf_2_3}), (\ref{HKf_2_4}) and substituting (\ref{HKf_2_1}), we conclude that $\ell_1\HXX=\t \ell_1\Hxx$.
\end{proof}

The next claim follows.

\begin{cor}
Assume that $\ell_1,\ell_2,\ell_3,\ell_4\colon\R^n\rightarrow\R$ are linear forms.
Then the Kahan-Hirota-Kimura map discretizing 
\begin{equation}
\label{vf_4}
\dot x=\det\left(\e,v_1,\dotsc,v_{n-3},\ell_2^2\nabla\left(\frac{\ell_1}{\ell_2}\right),\ell_4^2\nabla\left(\frac{\ell_3}{\ell_4}\right)\right),
\end{equation}
where $\e=(\e_1,\dotsc,\e_n)^{\T}$, admits $F_1=\ell_1/\ell_2$, $F_2=\ell_3/\ell_4$ and (\ref{F3}) as $n-1$ functionally independent conserved quantities. Moreover the Kahan-Hirota-Kimura map has the invariant measure form
\begin{equation}
\Omega=\frac{\mathrm dx_1\wedge\mathrm dx_2\wedge\dotsb\wedge\mathrm dx_n}{\ell_1\ell_2\ell_3\ell_4}.\label{vf_4_measure}
\end{equation}
\end{cor}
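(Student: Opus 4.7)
The claim combines a conserved-quantity argument in the spirit of Proposition~\ref{prop:2}(i) with an invariant-measure argument modeled on Proposition~\ref{prop:4}(ii). The crucial structural observation is that (\ref{vf_4}) is symmetric in the roles of the pairs $(\ell_1,\ell_2)$ and $(\ell_3,\ell_4)$ (swapping them only flips a sign in the determinant), so a single computation applied twice suffices to establish $\t F_1=F_1$ and $\t F_2=F_2$.

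For the conserved quantities, I would write the Kahan-Hirota-Kimura discretization of (\ref{vf_4}) as
$$\t x-x=\ep\bigl[\det(\e,v_1,\dotsc,v_{n-3},L_{12},\t L_{34})+\det(\e,v_1,\dotsc,v_{n-3},\t L_{12},L_{34})\bigr],$$
where $L_{ij}:=\ell_j^2\nabla(\ell_i/\ell_j)=\ell_j\nabla\ell_i-\ell_i\nabla\ell_j$, and then multiply on the left by $\nabla\ell_k^{\T}$ for $k=1,\dotsc,4$. Exactly as in the proof of Proposition~\ref{prop:2}(i), multiplying by $\nabla\ell_1^{\T}$ or $\nabla\ell_2^{\T}$ makes the $\ell_j\nabla\ell_i$ contribution in $L_{12}$ vanish by a repeated-column argument, yielding the pair
$$\t\ell_k(1+\ep\Delta_{34})=\ell_k(1-\ep\t\Delta_{34}),\qquad k=1,2,$$
where $\Delta_{34}:=\det(\nabla\ell_1,v_1,\dotsc,v_{n-3},\nabla\ell_2,L_{34})$. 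Dividing gives $\t F_1=F_1$. By the $(1,2)\leftrightarrow(3,4)$ symmetry, multiplying by $\nabla\ell_3^{\T},\nabla\ell_4^{\T}$ produces the analogous pair with $\Delta_{12}:=\det(\nabla\ell_3,v_1,\dotsc,v_{n-3},\nabla\ell_4,L_{12})$, yielding $\t F_2=F_2$. The linear integrals $V_i$ are preserved automatically by the Kahan-Hirota-Kimura scheme.

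For the invariant measure I follow the Sylvester--Schur strategy of Proposition~\ref{prop:4}(ii). The Jacobian decomposes as $f'=\sum_{i=1}^4 A_i\nabla\ell_i^{\T}$ with $A_i$ an explicit linear combination of the constant cofactor vectors $D_{jk}:=\det(\e,v_1,\dotsc,v_{n-3},\nabla\ell_j,\nabla\ell_k)$. Sylvester's identity reduces $\det(I-\ep f')$ to the determinant of a $4\times4$ matrix $M_{ij}:=\nabla\ell_i^{\T}A_j$, which, after the expansions $\Delta_{34}=\ell_4\Delta_{123}-\ell_3\Delta_{124}$ and $\Delta_{12}=\ell_1\Delta_{234}-\ell_2\Delta_{134}$ (with $\Delta_{ijk}:=\det(\nabla\ell_i,v_1,\dotsc,v_{n-3},\nabla\ell_j,\nabla\ell_k)$), turns out to have diagonal blocks $-\Delta_{34}I_2$, $-\Delta_{12}I_2$ and rank-one off-diagonal blocks. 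A Schur complement computation then gives
$$\det(I-\ep f')=(1+\ep\Delta_{12})(1+\ep\Delta_{34})\bigl(1+\ep(\Delta_{12}+\Delta_{34})\bigr),$$
and $\det(I+\ep\t f')$ is obtained by the substitution $(x,\ep)\mapsto(\t x,-\ep)$. The step I expect to be most delicate is then verifying the measure identity $\det(I+\ep\t f')/\det(I-\ep f')=\t\ell_1\t\ell_2\t\ell_3\t\ell_4/(\ell_1\ell_2\ell_3\ell_4)$: here the key observation is that $\Delta_{34}$ depends linearly only on $(\ell_3,\ell_4)$ and $\Delta_{12}$ only on $(\ell_1,\ell_2)$, so the first-stage equations give $\t\Delta_{34}=\nu\Delta_{34}$ and $\t\Delta_{12}=\mu\Delta_{12}$ with $\mu:=\t\ell_1/\ell_1=\t\ell_2/\ell_2$ and $\nu:=\t\ell_3/\ell_3=\t\ell_4/\ell_4$. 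Substituting these, the cross factor $(1+\ep(\Delta_{12}+\Delta_{34}))$ combines with its barred counterpart to contribute exactly $\mu\nu$ to the ratio, leaving $\mu^2\nu^2=\t\ell_1\t\ell_2\t\ell_3\t\ell_4/(\ell_1\ell_2\ell_3\ell_4)$ once the other factors cancel, which closes the argument.
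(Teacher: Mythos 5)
Your proof is correct. For the conserved quantities it is essentially the paper's argument: the same polarized form of the map, the same left-multiplication by $\nabla\ell_k^{\T}$ with the repeated-column cancellation, and division of the two resulting multiplicative relations (the paper records these directly in the expanded form with $\Delta_{123},\Delta_{124},\Delta_{134},\Delta_{234}$, which is exactly what you get after expanding $L_{34}$ and $L_{12}$ in your $\Delta_{34}$, $\Delta_{12}$); the preservation of the linear integrals $V_i$ is invoked in the same way. Where you genuinely differ is the invariant measure, for which the paper gives no details beyond ``similar computations as in the proof of Proposition \ref{prop:4}'', i.e.\ Sylvester's formula plus Sherman--Morrison followed by an identity check that, in the earlier propositions, is delegated to computer algebra. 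Your route --- reduce $\det(I-\ep f')$ to the $4\times4$ matrix $M_{ij}=\nabla\ell_i^{\T}A_j$, observe the block structure with diagonal blocks $-\Delta_{34}I_2$, $-\Delta_{12}I_2$ and rank-one off-diagonal blocks, and take a Schur complement --- is sound and does yield the stated factorization $(1+\ep\Delta_{12})(1+\ep\Delta_{34})(1+\ep(\Delta_{12}+\Delta_{34}))$: the trace of the rank-one product of the off-diagonal blocks is $\Delta_{12}\Delta_{34}$, which is precisely what collapses the a priori quartic to this cubic. The concluding verification via $\t\Delta_{34}=\nu\Delta_{34}$, $\t\Delta_{12}=\mu\Delta_{12}$ and the two scalar relations then gives the ratio $\mu^{2}\nu^{2}$ by hand, so your version supplies a fully human-checkable proof of the measure statement that the paper omits. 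Two cosmetic points, neither affecting correctness: (a) your determinant definition $\Delta_{12}=\det(\nabla\ell_3,v_1,\dotsc,v_{n-3},\nabla\ell_4,L_{12})$ expands to $\ell_2\Delta_{134}-\ell_1\Delta_{234}$, the negative of the expansion $\ell_1\Delta_{234}-\ell_2\Delta_{134}$ you use afterwards (and which is the convention consistent with your equations for $\t\ell_3,\t\ell_4$), so one sign convention should be fixed throughout; (b) the phrase ``once the other factors cancel'' is misleading --- the factors $(1-\ep\t\Delta_{34})/(1+\ep\Delta_{34})$ and $(1-\ep\t\Delta_{12})/(1+\ep\Delta_{12})$ do not cancel but contribute $\mu$ and $\nu$ respectively, which together with the $\mu\nu$ from the cross factor produce the required $\mu^{2}\nu^{2}$.
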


\begin{proof}
The conserved quantities follow from Proposition \ref{prop:2}. Indeed, we have
\beq
\label{HKf_3}
\begin{split}
\frac{\x-x}{\epsilon}=&(\ell_1\t \ell_3+\t \ell_1\ell_3)\Delta_{24}-(\ell_1\t \ell_4+\t \ell_1\ell_4)\Delta_{23}-(\ell_2\t \ell_3+\t \ell_2\ell_3)\Delta_{14}+(\ell_2\t \ell_4+\t \ell_2\ell_4)\Delta_{13},
\end{split}
\eeq
where $\Delta_{ij}=\det(\e,v_1,\dotsc,v_{n-3},\nabla \ell_i,\nabla \ell_j)$.
Then multiplying (\ref{HKf_3}) from the left by the vectors $\nabla \ell_i^{\T}$, $i=1,\dotsc,4$, yields
\begin{align}
\t \ell_1(1+\epsilon(\ell_4\Delta_{123}-\ell_3\Delta_{124}))
&=\ell_1(1+\epsilon(\t \ell_3\Delta_{124}-\t \ell_4\Delta_{123})),\nonumber \\
\t \ell_2(1+\epsilon(\ell_4\Delta_{123}-\ell_3\Delta_{124}))
&=\ell_2(1+\epsilon(\t \ell_3\Delta_{124}-\t \ell_4\Delta_{123})),\nonumber \\
\t \ell_3(1+\epsilon(\ell_1\Delta_{234}-\ell_2\Delta_{134}))
&=\ell_3(1+\epsilon(\t \ell_2\Delta_{134}-\t \ell_1\Delta_{234})),\nonumber \\
\t \ell_4(1+\epsilon(\ell_1\Delta_{234}-\ell_2\Delta_{134}))
&=\ell_4(1+\epsilon(\t \ell_2\Delta_{134}-\t \ell_1\Delta_{234})),\nonumber 
\end{align}
where $\Delta_{ijk}=\det(\nabla \ell_i,v_1,\dotsc,v_{n-3},\nabla \ell_j,\nabla \ell_k)$.
The existence of the invariant measure (\ref{vf_4_measure}) can be verified by similar computations as in the proof of Proposition \ref{prop:4}.
\end{proof}

\section{Conclusions}

In this paper we presented some new examples of Kahan-Hirota-Kimura discretizations. The ideal aim of the present work was not only to enrich the already existing long list
of systems for which such discretization scheme preserves good features of the continuous systems (complete integrability in the optimal case). Our goal and hope would be also to attract
attention of experts in the theory of integrable systems and in algebraic geometry. As a matter of fact, the integrability mechanism of the Kahan-Hirota-Kimura
discretization is not unveiled yet. One of the major open problems is indeed to identify and characterize those structural
properties of an integrable continuous vector field which ensure the preservation of integrability in the discrete setting.

\section{Acknowledgements}

We thank Yuri Suris for inspiring discussions.
This research is supported by the DFG Collaborative Research Center TRR 109 ``Discretization in Geometry and Dynamics''.



\end{document}